\documentclass[12pt,reqno]{amsart}

%

\headheight=6.15pt
\textheight=8.75in
\textwidth=6.5in
\oddsidemargin=0in
\evensidemargin=0in
\topmargin=0in

\usepackage{epsfig}
\usepackage{amsmath, amsthm, amsfonts}
\usepackage{graphicx}

\numberwithin{equation}{section}

\newcommand{\isup}[3]{\underset{#1\leq #2\leq #3}{\textrm{sup }}}

\newcommand{\D}{{\mathbb D}}

\newcommand{\C}{{\mathbb C}}

\newcommand{\Z}{{\mathbb Z}}
\newcommand{\T}{{\mathbb T}}
\renewcommand{\d}{\partial}

\newcommand{\f}{\varphi}

\newtheorem{theo}{{\sc \bf Theorem}}[section]
\newtheorem{cor}[theo]{{\sc \bf Corollary}}
\newtheorem{lem}[theo]{{\sc \bf Lemma}}
\newtheorem{prop}[theo]{{\sc \bf Proposition}}

\begin{document}

\title{Global boundary conditions for a Dirac operator on the solid torus}

\author{Slawomir Klimek}
\address{Department of Mathematical Sciences,
Indiana University-Purdue University Indianapolis,
402 N. Blackford St., Indianapolis, IN 46202, U.S.A.}
\email{sklimek@math.iupui.edu}

\author{Matt McBride}
\address{Department of Mathematical Sciences,
Indiana University-Purdue University Indianapolis,
402 N. Blackford St., Indianapolis, IN 46202, U.S.A.}
\email{mmcbride@math.iupui.edu}

\thanks{}

\date{\today}

\begin{abstract}
We study a Dirac operator subject to Atiayh-Patodi-Singer like boundary conditions on the solid torus and show that the corresponding boundary value problem is elliptic, in the sense that the Dirac operator has a compact parametrix.
\end{abstract}

\maketitle
\section{Introduction}

The celebrated Atiyah-Patodi-Singer boundary condition \cite{APS} for Dirac operators on closed manifolds with boundary was introduced as a key ingredient in the generalization of the Atiyah-Singer index theorem.   It is a non-local boundary condition which makes the Dirac operator Fredholm.   An advantage of the APS condition  is that further detailed analysis can be carried out where local conditions such as Dirichlet and Neumann may not be well behaved.   

That theory works, however, under the assumption that both the manifold and the operator have a cylindrical structure near the boundary. A semi-infinite cylinder can then be smoothly attached to the manifold and the Dirac operator can be naturally extended over to the cylinder. The APS boundary condition can be then described in the following geometrical terms: a sufficiently regular section is in the domain of the operator if it extends to a square integrable solution on the cylinder. The issue though is that many concrete natural operators do not have a cylindrical structure near the boundary.
Even a simple d-bar operator  $\d/\d\overline{z}$ on a disk in the complex plane does not have this structure. The solid torus example studied here is not cylindrical near boundary either.

In this paper we consider a Dirac operator on the solid torus considered geometrically as the product of the unit disk and the unit circle. We construct another non-local boundary condition similar in spirit to the APS boundary condition. It was inspired by the non-local boundary conditions discussed in \cite{MS} and \cite{FSGS} to get around the necessity of having a cylindrical structure near the boundary.   The boundary condition we propose in this paper has the same geometrical interpretation as the APS condition.   Namely, we consider the solid torus as a subset of the bigger noncompact space of the plane cross the unit circle. We define the domain of the Dirac operator in full analogy with APS as consisting of those sufficiently regular sections which extend to  square integrable solutions on the complement of the solid torus.

The motivation for studying this particular example comes from our larger project of developing a concept of a noncommutative manifold with boundary and noncommutative elliptic boundary conditions. This is done by studying examples, starting with two-dimensional domains and continuing with more complex cases.
In particular our efforts were concentrated on studying quantum analogs of Dirac operators subject to APS like boundary conditions, see \cite{CKW}, \cite{KM1}, \cite{KM2}, and \cite{KM3}.  The solid torus studied in this paper is possibly the simplest three dimensional example, yet significantly more difficult then the two-dimensional examples studied in our previous papers. While the standard APS theory does not apply, the example however  seems to have an attractive noncommutative version. In the spirit of the series, this paper will be followed up by another study that discusses the same type of theory and calculations for the quantum analog to the solid torus.

We show in this paper that the Dirac operator on the solid torus subject to our non-local boundary condition is self-adjoint and has no kernel. Using a partial Fourier transform we obtain an explicit formula for its inverse. We then show that the inverse is a compact operator, which is the main feature of elliptic boundary conditions. Moreover we prove that the inverse is a $p$-th Schatten class operator for $p>3$. This is obtained by direct analysis of the formula for the inverse using subtle estimates involving modified Bessel functions.

The paper is organized as follows.   In section 2 the Hilbert space and the Dirac operator $D$  are defined, and the boundary condition is stated.   The section also contains the computation of the kernel of the Dirac operator, and the computation of its inverse $Q$.   The proof of the main theorem, the compactness of $Q$, and the Schatten class computation is contained in section 3.   In the appendix we collect numerous facts about the modified Bessel functions, some are classical and some are more recent.

\section{The Dirac Operator}

We  begin with the necessary notation.   Let $\D = \left\{z\in \C \ : \ |z|\leq 1\right\}$ be the unit disk and let $S^1 = \left\{e^{i\theta}\in \C \ : \ 0\leq\theta\leq 2\pi\right\}$ be the unit circle.
Also let $\T^2$ be the two dimensional torus and let $ST^2$ be the solid torus: $ST^2 = \D \times S^1\subset\C\times S^1$.  The boundary of $ST^2$ is $\T^2$.  The operators that we are studying will be acting in the Hilbert space $\mathcal{H} = L^2(ST^2,\C^2) \cong L^2(ST^2)\otimes \C^2$, i.e. the space of square-integrable complex vector-valued functions on the solid torus.  The inner product of $F,G\in\mathcal{H}$ will be denoted as $\langle F, G \rangle$.

We proceed to the definitions of the main object we study in this paper.
We consider the following formally self-adjoint Dirac operator $D$ defined on $\mathcal{H}$ by

\begin{equation}\label{dirac_operator}
D = \left(
\begin{array}{cc}
\frac{1}{i}\frac{\d}{\d\theta} & 2\frac{\d}{\d\overline{z}} \\
-2\frac{\d}{\d z} & -\frac{1}{i}\frac{\d}{\d\theta}
\end{array}\right).
\end{equation}
Notice that $D$ can act on either $L^2(\D\times S^1)\otimes \C^2$ or $L^2(\C \times S^1)\otimes \C^2$.   

The domain of $D$ is defined to be: 
\begin{equation}\label{dirac_domain}
\textrm{dom}(D) = \left\{F\in H^1(ST^2)\otimes \C^2\ : \ \exists F^{ext}\in H_{loc}^1\left((\C\times S^1)\setminus ST^2\right)\otimes \C^2\right\}
\end{equation}
such that (1), (2), and (3) hold:

\begin{enumerate}
\item $F^{ext}|_{\T^2} = F|_{\T^2}$,
\item $DF^{ext} = 0$,
\item $F^{ext}\in L^2\left((\C\times S^1)\setminus ST^2\right)\otimes \C^2$.
\end{enumerate}
Here $H^1$ is the first Sobolev space.

The first task is to study the kernel of $D$.  
For a function $F\in L^2(ST^2)\otimes \C^2$, by using the polar decomposition $z=re^{i\f}$, we have the Fourier decomposition as follows:

\begin{equation}\label{FourierDec}
F = \sum_{m,n\in\Z} \left(
\begin{array}{c}
f_{m,n}(r) \\
g_{m,n}(r)
\end{array}\right) e^{in\f + im\theta}.
\end{equation}
The norm of $F$ can then be expressed as:
\begin{equation*}
\|F\|^2=\langle F, F \rangle=\sum_{m,n\in\Z}\int_0^1 \left(|f_{m,n}|^2+|g_{m,n}|^2\right)r\,dr.
\end{equation*}

We wish to solve the equation $DF=0$ without any kind of conditions imposed.  This is done in the following proposition.

\begin{prop}\label{formal_kernel}
Let $D$ be the operator defined by (\ref{dirac_operator}) and acting in the Hilbert space $L^2\left(ST^2\setminus(\{0\}\times S^1)\right)\otimes \C^2$.   Then 
the kernel of $D$ consists of those  $F\in L^2(ST^2\setminus(\{0\}\times S^1))\otimes \C^2$ for which the coefficients of (\ref{FourierDec}) satisfy the following relations: for $m\neq0$ and any $n$
\begin{equation}\label{kernel_f}
f_{m,n+1}(r)  = \frac{m}{|m|}(-A_{m,n}I_{n+1}(|m|r) + B_{m,n}K_{n+1}(|m|r))
\end{equation}
and 
\begin{equation}\label{kernel_g}
g_{m,n}(r)  = A_{m,n}I_n(|m|r) + B_{m,n}K_n(|m|r),
\end{equation}
while if $m=0$ and any $n$ then
$f_{0,n}(r) = A_{0,n}r^{-n}$ and $g_{0,n}(r) = B_{0,n}r^n$.
Here $A_{m,n}, B_{m,n}$ are constants and $I_n, K_n$ are the modified Bessel functions of the first and second kind respectively.   
\end{prop}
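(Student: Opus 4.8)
The plan is to separate variables via the Fourier decomposition (\ref{FourierDec}), reducing $DF=0$ to a family of ordinary differential systems indexed by $(m,n)$, and to solve each of these explicitly.

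First I would express the off-diagonal entries of $D$ in polar coordinates $z=re^{i\f}$: one has $2\,\d/\d z = e^{-i\f}\bigl(\d_r-\tfrac{i}{r}\d_\f\bigr)$ and $2\,\d/\d\overline z = e^{i\f}\bigl(\d_r+\tfrac{i}{r}\d_\f\bigr)$, so that
\[
2\frac{\d}{\d\overline z}\bigl(h(r)e^{in\f}\bigr)=\Bigl(h'(r)-\tfrac{n}{r}h(r)\Bigr)e^{i(n+1)\f},\qquad
2\frac{\d}{\d z}\bigl(h(r)e^{in\f}\bigr)=\Bigl(h'(r)+\tfrac{n}{r}h(r)\Bigr)e^{i(n-1)\f}.
\]
Plugging (\ref{FourierDec}) into $DF=0$ and reading off the coefficient of $e^{in\f+im\theta}$ in each component, the equation decouples into independent $2\times2$ first-order systems linking $f_{m,n+1}$ with $g_{m,n}$:
\[
m\,f_{m,n+1}+g_{m,n}'-\tfrac{n}{r}\,g_{m,n}=0,\qquad
f_{m,n+1}'+\tfrac{n+1}{r}\,f_{m,n+1}+m\,g_{m,n}=0.
\]

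For $m=0$ the two equations integrate immediately to $g_{0,n}(r)=B_{0,n}r^{n}$ and $f_{0,n}(r)=A_{0,n}r^{-n}$. For $m\neq0$ I would solve the first equation for $f_{m,n+1}$ in terms of $g_{m,n}$, substitute into the second, and watch the $1/r^2$ terms combine: this produces the modified Bessel equation $g_{m,n}''+\tfrac1r g_{m,n}'-\bigl(m^2+\tfrac{n^2}{r^2}\bigr)g_{m,n}=0$, whose general solution (using $I_{-n}=I_n$, $K_{-n}=K_n$ for integer $n$) is $g_{m,n}(r)=A_{m,n}I_n(|m|r)+B_{m,n}K_n(|m|r)$. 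Then the recurrences $I_n'(x)=I_{n+1}(x)+\tfrac{n}{x}I_n(x)$ and $K_n'(x)=-K_{n+1}(x)+\tfrac{n}{x}K_n(x)$ give $g_{m,n}'-\tfrac{n}{r}g_{m,n}=|m|\bigl(A_{m,n}I_{n+1}(|m|r)-B_{m,n}K_{n+1}(|m|r)\bigr)$, and substituting this back into the first equation, together with $|m|/m=m/|m|$, yields exactly (\ref{kernel_f}).

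It then remains to confirm that this mode-by-mode analysis captures all of $\ker D$. Testing $DF=0$ against $\overline{e^{in\f+im\theta}}\,\psi(r)$ with $\psi\in C_c^\infty((0,1))$ shows that $(f_{m,n+1},g_{m,n})$ solves the first-order system weakly on $(0,1)$; since that system is a regular linear ODE away from $r=0$ --- equivalently, since $D$ is elliptic on the interior of $ST^2\setminus(\{0\}\times S^1)$ --- these weak solutions are classical, hence lie in the two-dimensional solution space computed above, so the coefficients of $F$ have the stated form. Conversely, if $F\in L^2(ST^2\setminus(\{0\}\times S^1))\otimes\C^2$ has coefficients of that form, reversing the computation shows every Fourier coefficient of the distribution $DF$ vanishes, hence $DF=0$. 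I expect the only genuine bookkeeping to be the sign pattern in the Bessel recurrences that produces the $m/|m|$ prefactor in (\ref{kernel_f}); the reduction of the PDE to the ODE systems is routine.
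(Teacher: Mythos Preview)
Your proposal is correct and follows essentially the same route as the paper: Fourier-decompose, reduce to the coupled first-order system for $(f_{m,n+1},g_{m,n})$, eliminate $f$ to obtain the modified Bessel equation for $g$, and recover $f$ via the recurrence relations (\ref{bessel_rel_1}). The only cosmetic difference is that the paper first rescales $t=|m|r$ before deriving the Bessel equation, and your final paragraph justifying completeness of the mode-by-mode analysis is extra rigor the paper does not spell out.
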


\begin{proof}
To solve the equation $DF =0$, it is enough to solve the equation

\begin{equation*}
\sum_{m,n\in\Z}\left(
\begin{array}{cc}
m & e^{i\f}\left(\frac{\d}{\d r} - \frac{n}{r}\right) \\
e^{-i\f}\left(-\frac{\d}{\d r} + \frac{n}{r}\right) & -m
\end{array}\right) \left(
\begin{array}{c}
f_{m,n}(r) \\
g_{m,n}(r)
\end{array}\right) e^{in\f + im\theta} =0
\end{equation*}
by the Fourier decomposition (\ref{FourierDec}).   There are two cases to consider. First for $m\neq 0$ and any $n$, letting $t=|m|r$,  
$\tilde f_{m,n}(t)=f_{m,n}(t/|m|)$, and $\tilde g_{m,n}(t)=g_{m,n}(t/|m|)$ we arrive at the following system of differential equations:

\begin{equation*}
\left\{
\begin{aligned}
&\tilde f_{m,n+1}'(t) + \frac{n+1}{t}\tilde f_{m,n+1}(t) + \frac{m}{|m|}\tilde g_{m,n}(t) = 0 \\
&\tilde g_{m,n}'(t) - \frac{n}{t}\tilde g_{m,n}(t) +\frac{m}{|m|}\tilde f_{m,n+1}(t) = 0.
\end{aligned}\right. 
\end{equation*}
Substituting the second equation into the first yields:

\begin{equation*}
\tilde g_{m,n}''(t) + \frac{\tilde g_{m,n}'(t)}{t} - \left(1+ \frac{n^2}{t^2}\right)\tilde g_{m,n}(t) =0,
\end{equation*}
which is equation $(\ref{bessel_eq})$.   This implies that $\tilde g_{m,n}(t)$ is a linear combination of the two modified Bessel functions, i.e. $\tilde g_{m,n}(t) = A_{m,n}I_n(t) + B_{m,n}K_n(t)$.   Then, using the second equation from the above system and using the relations (\ref{bessel_rel_1}), we  get the desired result.   

If $m=0$ then the above system of differential equations reduces to:

\begin{equation*}
\left\{
\begin{aligned}
&f_{0,n}'(r) + \frac{n}{r}f_{0,n}(r) = 0 \\
&g_{0,n}'(r) - \frac{n}{r}g_{0,n}(r) = 0.
\end{aligned}\right. 
\end{equation*}
This system is uncoupled and the solutions are easily seen to be $f_{0,n}(r) = A_{0,n}r^{-n}$ and $g_{0,n}(r) = B_{0,n}r^n$.
Thus this completes the proof.
\end{proof}

In the following proposition we rephrase the domain condition of $D$ in terms of the Fourier coefficients to explicitly write down the boundary condition.

\begin{prop}\label{boundary_prop} 
Suppose that $F$ is in the domain of $D$ and has Fourier decomposition given by (\ref{FourierDec}). Then 
\begin{equation}\label{bound_equ} 
|m|K_{n+1}(|m|)g_{m,n}(1) - mK_n(|m|)f_{m,n+1}(1)=0
\end{equation}
if $m\ne 0$ and $f_{0,n}(1)=0$
for $n\leq 0$, and $g_{0,n}(1)=0$ for $n\geq 0$.
\end{prop}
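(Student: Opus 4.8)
The plan is to unravel the domain condition (\ref{dirac_domain}) mode by mode using the Fourier decomposition (\ref{FourierDec}). The key observation is that the exterior region $(\C\times S^1)\setminus ST^2=\{r>1\}\times S^1$ and the operator $D$ are both invariant under the rotations in $\theta$ and in $\f$, so any extension $F^{ext}$ of $F$ decomposes in exactly the same fashion,
\begin{equation*}
F^{ext}=\sum_{m,n\in\Z}\left(\begin{array}{c}f^{ext}_{m,n}(r)\\ g^{ext}_{m,n}(r)\end{array}\right)e^{in\f+im\theta},\qquad \|F^{ext}\|_{L^2}^2=\sum_{m,n\in\Z}\int_1^\infty\left(|f^{ext}_{m,n}|^2+|g^{ext}_{m,n}|^2\right)r\,dr.
\end{equation*}
Hence conditions (1)--(3) split into conditions on each pair $(f^{ext}_{m,n+1},g^{ext}_{m,n})$ separately, and square-integrability of $F^{ext}$ forces every such mode to lie in $L^2\big((1,\infty),r\,dr\big)$. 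Condition (2), $DF^{ext}=0$, says that each mode solves the radial system from the proof of Proposition \ref{formal_kernel}, now on the interval $(1,\infty)$, so by that proposition $f^{ext}_{m,n+1}$ and $g^{ext}_{m,n}$ have the form (\ref{kernel_f})--(\ref{kernel_g}) with their own constants $A^{ext}_{m,n},B^{ext}_{m,n}$ (and the power-law form when $m=0$).

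Next I would impose conditions (3) and (1). For $m\neq0$, the large-argument asymptotics of the modified Bessel functions (collected in the appendix) show that the second-kind functions $K_n(|m|r)$ decay exponentially in $r$ and are square-integrable against $r\,dr$ on $(1,\infty)$, whereas the first-kind functions $I_n(|m|r)$ grow exponentially and are not; hence $A^{ext}_{m,n}=0$ and
\begin{equation*}
g^{ext}_{m,n}(r)=B^{ext}_{m,n}K_n(|m|r),\qquad f^{ext}_{m,n+1}(r)=\frac{m}{|m|}B^{ext}_{m,n}K_{n+1}(|m|r).
\end{equation*}
Condition (1), $F^{ext}|_{\T^2}=F|_{\T^2}$, gives $g^{ext}_{m,n}(1)=g_{m,n}(1)$ and $f^{ext}_{m,n+1}(1)=f_{m,n+1}(1)$; since $K_n(|m|)>0$ one solves the first equation for $B^{ext}_{m,n}=g_{m,n}(1)/K_n(|m|)$, substitutes into the second, and clears denominators, which gives (\ref{bound_equ}) after using $m^2/|m|=|m|$. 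The $H^1_{loc}$ regularity near $r=1$ is consistent throughout since $K_n$ is smooth there.

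The case $m=0$ fits the same scheme: the radial solutions are $f^{ext}_{0,n}(r)=A^{ext}_{0,n}r^{-n}$ and $g^{ext}_{0,n}(r)=B^{ext}_{0,n}r^n$, and computing $\int_1^\infty r^{1-2n}\,dr$ and $\int_1^\infty r^{1+2n}\,dr$ shows these lie in $L^2\big((1,\infty),r\,dr\big)$ only for $n\ge2$, respectively $n\le-2$, so $A^{ext}_{0,n}=0$ for $n\le1$ and $B^{ext}_{0,n}=0$ for $n\ge-1$; matching at $r=1$ then forces $f_{0,n}(1)=0$ for $n\le1$ and $g_{0,n}(1)=0$ for $n\ge-1$. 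Since $F\in H^1(ST^2)$ already forces $f_{0,n}\equiv0$ for $n\ge1$ and $g_{0,n}\equiv0$ for $n\le-1$ (otherwise $r^{-n}$, respectively $r^n$, fails to be square-integrable near $r=0$), the remaining content is precisely $f_{0,n}(1)=0$ for $n\le0$ and $g_{0,n}(1)=0$ for $n\ge0$, as stated. I expect the only delicate point to be the rigorous justification of the mode-by-mode reduction — that $F^{ext}\in H^1_{loc}\cap L^2$ genuinely decomposes into Fourier modes each individually solving the radial system and square-integrable at infinity — together with the bookkeeping of the borderline modes at $m=0$; the Bessel asymptotics themselves are classical and are recorded in the appendix.
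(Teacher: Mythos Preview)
Your approach is essentially the paper's: decompose $F^{ext}$ in Fourier modes, use Proposition~\ref{formal_kernel} for the general solution on $r>1$, kill the growing part via the large-$r$ asymptotics, and match at $r=1$. The only cosmetic difference for $m\ne 0$ is order of operations: the paper first solves the $2\times 2$ matching system for $A_{m,n},B_{m,n}$ explicitly (using the Wronskian (\ref{bessel_wronskian})) and then sets $A_{m,n}=0$, while you impose $A^{ext}_{m,n}=0$ first and then match, eliminating $B^{ext}_{m,n}$.

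There is, however, a real slip in your $m=0$ paragraph. Your integrability computation is correct and indeed sharper than the paper's own proof: the $L^2$ condition on $(1,\infty)$ forces $f_{0,n}(1)=0$ for all $n\le 1$ and $g_{0,n}(1)=0$ for all $n\ge -1$, which already contains the proposition as stated. But your attempt to explain away the extra cases $n=1$ and $n=-1$ is wrong. The sentence ``$F\in H^1(ST^2)$ already forces $f_{0,n}\equiv 0$ for $n\ge 1$ \ldots\ (otherwise $r^{-n}$ fails to be square-integrable near $r=0$)'' confuses an arbitrary $F$ in the domain with an element of $\ker D$. The Fourier coefficient $f_{0,n}(r)$ of a general $F$ is just some $H^1$-class function of $r$, not the power $r^{-n}$; e.g.\ $F=(r^2e^{i\f},0)^T$ is smooth on $ST^2$ with $f_{0,1}(r)=r^2\not\equiv 0$. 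This error is harmless for the proposition --- the conditions you actually derived are strictly stronger than what is claimed --- but the false justification should simply be deleted.
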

\begin{proof}
We want to find a function $F^{ext}$ for $r\geq 1$ such that $DF^{ext} =0$ and $F^{ext}|_{r=1} = F|_{r=1}$.  First solving the system $DF^{ext} =0$ for $F^{ext}$ yields

\begin{equation*}
F^{ext} = \sum_{m,n\in\Z} \begin{pmatrix}f^{ext}_{m,n}(r) \\ g^{ext}_{m,n}(r)\end{pmatrix} e^{in\f+im\theta}
\end{equation*}
with $f^{ext}_{m,n+1}(r)$ and $g^{ext}_{m,n}(r)$  given by (\ref{kernel_f}) and (\ref{kernel_g}) for $m\ne 0$. Additionally $f^{ext}_{0,n}(r) = A_{0,n}r^{-n}$ and $g^{ext}_{0,n}(r) = B_{0,n}r^n$. First consider $m\ne 0$. In order for the solutions to agree on the boundary of the disk, the coefficients $A_{m,n}$ and $B_{m,n}$ must  solve the following system of equations

\begin{equation*}
\left(
\begin{array}{cc}
I_n(|m|) & K_n(|m|) \\
-\frac{m}{|m|}I_{n+1}(|m|) & \frac{m}{|m|}K_{n+1}(|m|)
\end{array}\right)\left(
\begin{array}{c}
A_{m,n} \\
B_{m,n}
\end{array}\right) = \left(
\begin{array}{c}
g_{m,n}(1)\\
f_{m,n+1}(1)
\end{array}\right).
\end{equation*}

The solution is: 
\begin{equation*}
A_{m,n} = |m|K_{n+1}(|m|)g_{m,n}(1) - mK_n(|m|)f_{m,n+1}(1)
\end{equation*} 
and
\begin{equation*} 
B_{m,n} = |m|I_{n+1}(|m|)g_{m,n}(1) + mI_n(|m|)f_{m,n+1}(1)
\end{equation*} 

If $m=0$ we get $f_{0,n}(1) = A_{0,n}$ and $g_{0,n}(1) = B_{0,n}$. 
 
Our boundary condition requires that $F^{ext}$ is square integrable on the complement of $ST^2$ in $\C\times S^1$. Because of the asymptotic properties of the modified Bessel functions, see (\ref{asym_infinity}), this forces $A_{m,n}=0$
for $m\ne 0$. If $m=0$ the integrability of powers of $r$ force $A_{0,n}=0$ for $n\leq 0$, and $B_{0,n}=0$ for $n\geq 0$. The statement follows from the above formulas for $A_{m,n}$, $A_{0,n}$, and $B_{0,n}$.
\end{proof}

Combining the above two propositions we obtain the following corollary.

\begin{cor}
Let $D$ be the operator defined by (\ref{dirac_operator}) subject to the boundary conditions (\ref{dirac_domain}).   Then its kernel is trivial.
\end{cor}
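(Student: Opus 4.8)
The plan is to play the two preceding propositions against each other, using the behavior of the modified Bessel functions both at the core circle $\{0\}\times S^1$ and at the boundary $r=1$. Let $F\in\textrm{dom}(D)$ with $DF=0$. Since $\textrm{dom}(D)\subset H^1(ST^2)\otimes\C^2$, in particular $F\in L^2\bigl(ST^2\setminus(\{0\}\times S^1)\bigr)\otimes\C^2$ and $DF=0$ on the punctured solid torus, so Proposition~\ref{formal_kernel} applies: the Fourier coefficients of $F$ in the decomposition (\ref{FourierDec}) are given by (\ref{kernel_f})--(\ref{kernel_g}) for $m\ne 0$ and by $f_{0,n}(r)=A_{0,n}r^{-n}$, $g_{0,n}(r)=B_{0,n}r^{n}$ for $m=0$, with constants $A_{m,n},B_{m,n}$.

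The first step is to use square integrability across the core circle to kill the singular solutions. As $t\to 0^{+}$ one has $I_n(t)$ bounded, $K_0(t)\sim-\log t$, and $K_j(t)\sim c_j t^{-|j|}$ for $j\ne 0$; hence against the weight $r\,dr$ the mode $K_j(|m|r)$ is square integrable near $r=0$ only for $j=0$, and $r^{-n}$ is square integrable near $r=0$ only for $n\le 0$. For fixed $m\ne 0$ the constant $B_{m,n}$ appears simultaneously in $g_{m,n}$ (through $K_n$) and in $f_{m,n+1}$ (through $K_{n+1}$); since $n$ and $n+1$ cannot both vanish, $B_{m,n}\ne 0$ would force at least one of these two coefficients to be non-square-integrable near $0$, which is impossible. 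Hence $B_{m,n}=0$ for all $m\ne 0$. For $m=0$ the same integrability forces $A_{0,n}=0$ for $n\ge 1$ and $B_{0,n}=0$ for $n\le -1$.

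Next I would feed $B_{m,n}=0$ into the boundary relation (\ref{bound_equ}). Then $g_{m,n}(1)=A_{m,n}I_n(|m|)$ and $f_{m,n+1}(1)=-\tfrac{m}{|m|}A_{m,n}I_{n+1}(|m|)$, and (\ref{bound_equ}) collapses, using $m^2/|m|=|m|$, to
\begin{equation*}
|m|\,A_{m,n}\bigl(I_n(|m|)K_{n+1}(|m|)+I_{n+1}(|m|)K_n(|m|)\bigr)=0 .
\end{equation*}
By the cross-product (Wronskian) identity $I_\nu(x)K_{\nu+1}(x)+I_{\nu+1}(x)K_\nu(x)=1/x$ the factor in parentheses equals $1/|m|\ne 0$, so $A_{m,n}=0$, and therefore $f_{m,n}=g_{m,n}=0$ for every $m\ne 0$. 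For $m=0$, Proposition~\ref{boundary_prop} gives $A_{0,n}=f_{0,n}(1)=0$ for $n\le 0$ and $B_{0,n}=g_{0,n}(1)=0$ for $n\ge 0$; combined with the integrability constraints above this yields $A_{0,n}=B_{0,n}=0$ for all $n$, so $f_{0,n}=g_{0,n}=0$ as well. Thus $F=0$ and $\ker D=\{0\}$.

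The only genuinely delicate point is the first step: one must be careful that it is precisely the $H^1$ (indeed already $L^2$) regularity across the core circle — not any condition at $r=1$ — that removes the $K$-Bessel modes, and in the borderline indices $n=0,-1$ it is essential to exploit the coupling between the components $f_{m,n+1}$ and $g_{m,n}$ rather than inspecting a single component. Once the singular modes are gone, everything is routine: the boundary condition (\ref{bound_equ}) is non-degenerate exactly because of the Bessel Wronskian, and the $m=0$ sector is settled by inspection.
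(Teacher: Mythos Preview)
Your proof is correct and follows essentially the same route as the paper: use Proposition~\ref{formal_kernel} to write the Fourier modes in terms of $I_n$, $K_n$ (or powers of $r$), kill the $K$-part via square integrability at the core, and then kill the $I$-part via the boundary relation of Proposition~\ref{boundary_prop}. Your explicit invocation of the Wronskian identity (\ref{bessel_wronskian}) to show that (\ref{bound_equ}) forces $A_{m,n}=0$ once $B_{m,n}=0$ is exactly the computation the paper leaves implicit when it says ``Proposition~\ref{boundary_prop} implies $A_{m,n}=0$''. The only cosmetic difference is in the $m=0$ sector: the paper observes that the same power-law formula extends the solution to all of $(\C\setminus\{0\})\times S^1$ and then uses non-integrability of $r^{\pm n}$ at \emph{both} $0$ and $\infty$ to conclude $A_{0,n}=B_{0,n}=0$ in one stroke, whereas you split the work between integrability at $0$ and the boundary condition at $r=1$; both arguments are valid and equally short.
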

\begin{proof}
Let $F\in L^2(ST^2\setminus(\{0\}\times S^1))\otimes \C^2$ be a solution of $DF=0$, as in Proposition \ref{formal_kernel}. Its extension to a solution on $\C\times S^1\setminus(\{0\}\times S^1)$ is clearly given by the same formula.  Since the powers of $r$ are either not square integrable at zero or at infinity, it is clear that $A_{0,n} =0$ and $B_{0,n}=0$. 

If $m \neq 0$ then in order for $F$ to be regular at zero we must have $B_{m,n}= 0$ by the asymptotic expansion of $K_n(t)$ near zero, see (\ref{asym_zero}). Then Proposition \ref{boundary_prop} implies that $A_{m,n}=0$, hence the kernel of $D$ is trivial.   
\end{proof}

It turns out that our boundary condition is self-adjoint as demonstrated in the next proposition.

\begin{prop}
The operator $D$ defined by (\ref{dirac_operator}) subject to the boundary conditions (\ref{dirac_domain}) is self-adjoint.
\end{prop}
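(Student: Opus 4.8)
The plan is to verify self-adjointness by the standard two-step argument: first establish the symmetry $\langle DF, G\rangle = \langle F, DG\rangle$ for $F, G \in \operatorname{dom}(D)$, and then show that $\operatorname{dom}(D^*) \subseteq \operatorname{dom}(D)$, so that the two domains coincide. For the symmetry, I would integrate by parts in the disk variable. Writing $D$ in the Fourier-decomposed form used in the proof of Proposition \ref{formal_kernel}, the computation $\langle DF, G\rangle - \langle F, DG\rangle$ collapses, mode by mode in $(m,n)$, to a sum of boundary terms on $\T^2$, i.e.\ terms evaluated at $r=1$ coming from the first-order radial derivatives $\partial/\partial r$ in $2\d/\d\overline z$ and $-2\d/\d z$ (the $\frac{1}{i}\d/\d\theta$ part is manifestly symmetric and contributes nothing). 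These boundary terms have the shape $\overline{g_{m,n}(1)}\,f_{m,n+1}(1)$ and its conjugate partner (and for $m=0$ the analogous expressions with the $r^{\pm n}$ solutions), up to constants independent of $F,G$.

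The key point is then that the boundary condition of Proposition \ref{boundary_prop} is exactly what kills these boundary terms. Here I would use the reformulation \eqref{bound_equ}: for $m\neq 0$ the condition says $f_{m,n+1}(1) = \dfrac{|m|K_{n+1}(|m|)}{mK_n(|m|)} g_{m,n}(1)$, so the boundary pairing between $F$ and $G$ becomes a real multiple of $\overline{g_{m,n}(1)}\,g_{m,n}(1)$-type symmetric expressions — more precisely, substituting this relation for both $F$ and $G$ produces a Hermitian-symmetric quantity whose contribution to $\langle DF,G\rangle - \langle F,DG\rangle$ cancels. (One checks that the coefficient $|m|K_{n+1}(|m|)/(mK_n(|m|))$ is such that the antisymmetric boundary form vanishes; since $K_n, K_{n+1}$ are real and positive, the relevant ratio is real, which is what forces cancellation.) For the $m=0$ modes, the conditions $f_{0,n}(1)=0$ for $n\le 0$ and $g_{0,n}(1)=0$ for $n\ge 0$ make one of the two factors in each boundary term vanish outright; the case $n=0$ forces both $f_{0,0}(1)=0$ and $g_{0,0}(1)=0$, so again there is no contribution. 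This proves $D$ is symmetric.

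For the reverse inclusion $\operatorname{dom}(D^*)\subseteq\operatorname{dom}(D)$, suppose $G\in\operatorname{dom}(D^*)$. Standard elliptic regularity for the first-order operator $D$ on the interior gives $G\in H^1_{loc}$, and testing against compactly supported $F$ identifies $D^*G$ with the distributional $DG$; combined with $DG\in\mathcal H$ one gets $G\in H^1(ST^2)\otimes\C^2$, so the trace $G|_{\T^2}$ makes sense. The requirement that $\langle DF, G\rangle = \langle F, D^*G\rangle$ for \emph{all} $F\in\operatorname{dom}(D)$ then forces the boundary form $\int_{\T^2}(\text{pairing of }F|_{\T^2}\text{ and }G|_{\T^2})$ to vanish for every admissible boundary value $F|_{\T^2}$. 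Working mode by mode, the admissible boundary data $(g_{m,n}(1), f_{m,n+1}(1))$ for $F$ range over a line in $\C^2$ (for each $m\ne 0$) — namely the line cut out by \eqref{bound_equ} — and the vanishing of the boundary pairing against this entire line forces $(g_{m,n}(1), f_{m,n+1}(1))$ for $G$ to lie on the \emph{same} line, because the boundary symplectic form restricted to that line is nondegenerate on the quotient, i.e.\ the line is Lagrangian. Hence $G$ satisfies the Fourier-coefficient boundary conditions of Proposition \ref{boundary_prop}; running that proposition's argument in reverse, $G$ admits the required square-integrable extension $G^{ext}$ on $(\C\times S^1)\setminus ST^2$ (built from the $K$-Bessel solutions for $m\ne 0$ and the integrable powers $r^{\pm n}$ for $m=0$), so $G\in\operatorname{dom}(D)$.

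The main obstacle is the second step — checking that the admissible boundary data form a Lagrangian (maximal isotropic) subspace for the boundary symplectic form, mode by mode, and uniformly enough to conclude. For $m\ne 0$ this reduces to the arithmetic fact that the one-complex-dimensional space $\{mK_n(|m|)f = |m|K_{n+1}(|m|)g\}$ is its own annihilator under the antisymmetric pairing coming from integration by parts — which works precisely because $K_n(|m|)$ and $K_{n+1}(|m|)$ are real — and for $m=0$ to the bookkeeping that splits $n\le 0$, $n\ge 0$, and $n=0$ correctly so that the data space is again exactly half-dimensional. Once the isotropy-plus-maximality of the boundary condition is in hand, both inclusions follow, and one concludes $D = D^*$.
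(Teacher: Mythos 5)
Your argument follows the paper's route almost exactly: Green's formula reduces $\langle DG,F\rangle-\langle G,DF\rangle$ to the mode-by-mode boundary pairing $\overline{p_{m,n+1}(1)}g_{m,n}(1)-\overline{q_{m,n}(1)}f_{m,n+1}(1)$ at $r=1$, and your observation that the line cut out by \eqref{bound_equ} is its own annihilator under this pairing (because the ratio $|m|K_{n+1}(|m|)/(mK_n(|m|))$ is real) is precisely the computation the paper performs to identify the adjoint boundary conditions with the original ones; the $m=0$ bookkeeping also matches.

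The one genuine gap is the regularity step at the start of your second paragraph. From $G\in\operatorname{dom}(D^*)$ you correctly get that $D^*G$ is the distributional $DG$ and $G\in H^1_{loc}$ in the interior, but the inference ``combined with $DG\in\mathcal H$ one gets $G\in H^1(ST^2)\otimes\C^2$'' is false as stated: already for the d-bar block, a holomorphic $f(z)=\sum a_nz^n$ with $\sum |a_n|^2/(n+1)<\infty$ lies in $L^2(\D)$ and satisfies $\bar\partial f=0\in L^2$, yet $f\notin H^1$ unless $\sum n|a_n|^2<\infty$; interior ellipticity plus $DG\in L^2$ never controls regularity up to the boundary. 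That $\operatorname{dom}(D^*)$ sits in $H^1$ up to $\T^2$ is exactly the nontrivial assertion that this global boundary condition is elliptic, and without it you can neither take the trace $G|_{\T^2}$ globally nor conclude $G\in\operatorname{dom}(D)$, whose very definition requires $H^1(ST^2)$. The paper closes this point by appealing to the elliptic boundary-value theory of \cite{BBW}; alternatively one can extract traces per Fourier mode (a first-order ODE system in $r$ with $L^2$ data is $H^1$ near $r=1$), which suffices to derive the adjoint boundary conditions, but the global $H^1$ membership still needs the uniform-in-$(m,n)$ estimates supplied by that theory. Apart from this step, your proposal is sound and coincides with the paper's proof.
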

\begin{proof}
It is clear that $D$ is formally self-adjoint. From the standard elliptic theory \cite{BBW} the domain of $D$ and its adjoint consists of (vector-valued) functions of the first Sobolev class. Thus the only thing that we need to check are the boundary conditions of the adjoint. To this end we inspect the boundary integral in Green's formula.  Let $F,G$ be $H^1$ functions on the solid torus. Using Proposition 2.2 from \cite{KM1} and the Fourier decompositions:
\begin{equation*}
\begin{aligned}
&F = \left(
\begin{array}{c}
f \\ g
\end{array}\right) = \sum_{m,n\in\Z} \left(
\begin{array}{c}
f_{m,n}(r) \\ g_{m,n}(r)
\end{array}\right) e^{in\f + im\theta} \\
&G = \left(
\begin{array}{c}
p \\ q
\end{array}\right) = \sum_{m,n\in\Z} \left(
\begin{array}{c}
p_{m,n}(r) \\ q_{m,n}(r)
\end{array}\right) e^{in\f + im\theta},
\end{aligned}
\end{equation*}
we obtain:
\begin{equation*}
\begin{aligned}
\langle D G, F\rangle - \langle G,D F \rangle &= 2\left(\langle p, \d g/\d\overline{z} \rangle - \langle q, \d f/\d z \rangle - \langle \d q/\d\overline{z}, f\rangle + \langle \d p/\d z, g\rangle\right) \\
&= 2\sum_{m,n\in\Z}\left(\overline{p_{m,n+1}(1)}g_{m,n}(1) - \overline{q_{m,n}(1)}f_{m,n+1}(1)\right). \\
\end{aligned}
\end{equation*}
Now suppose that $F$ is in the domain of $D$, so it satisfies the conditions of Proposition \ref{boundary_prop}. For $G$ to be in the domain of the adjoint of $D$ we need the above expression to be equal to zero. This gives:
\begin{equation*}
\begin{aligned}
&\sum_{m\in\Z\setminus\{0\},n\in\Z}\left(\frac{m}{|m|}\frac{K_n(|m|)}{K_{n+1}(|m|)}\overline{p_{m,n+1}(1)} - \overline{q_{m,n}(1)}\right)f_{m,n+1}(1)+\\
&+\sum_{n< 0}\overline{p_{0,n+1}(1)}g_{0,n}(1) - \sum_{n\geq 0}\overline{q_{0,n}(1)}f_{0,n+1}(1)\ =0.
\end{aligned}
\end{equation*}
For $m\ne 0$ the above equation will equal zero for arbitrary $F$ only if $|m|K_{n+1}(|m|)q_{m,n}(1) - mK_n(|m|)p_{m,n+1}(1) =0$. Additionally we must have $p_{0,n}(1)=0$ for $n\leq 0$, and $q_{0,n}(1)=0$ for $n\geq 0$. All together those requirements are exactly the same as the original boundary condition. Hence $D$ and $D^*$ have the same domain and the proof is complete.
\end{proof}

The next goal is to construct the inverse of $D$.  This is done by explicit solving of a non-homogeneous system of differential equations for the Fourier components and adjusting the integration constants to get the regularity at $r=0$ and so that the boundary condition is satisfied.

\begin{prop}\label{Q_construction}
Let $D$ be the operator defined by (\ref{dirac_operator}) subject to the boundary conditions (\ref{dirac_domain}).   Then the operator $Q$ given by the formula (\ref{Q_formula}) below is the inverse to $D$, in other words $QD=DQ=I$.
\end{prop}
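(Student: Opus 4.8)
The plan is to diagonalize the problem with the partial Fourier transform in $\f$ and $\theta$ and solve $DF = G$ one mode at a time. Writing $G$ in the form (\ref{FourierDec}), the equation $DF=G$ becomes, for each fixed $m$ and $n$, the inhomogeneous version of the ordinary differential system appearing in the proof of Proposition \ref{formal_kernel}. For $m\ne 0$, rescaling by $t=|m|r$ turns the homogeneous part into the modified Bessel system, whose solution pair is $(I_n,K_n)$ together with the companion $f$-component obtained from (\ref{bessel_rel_1}); since the Wronskian of $I_n$ and $K_n$ is explicit, variation of parameters gives a completely explicit particular solution as an integral against a kernel built from products of $I_n$ and $K_n$. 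This yields a two-parameter family of solutions in each mode.

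The two constants are then fixed by the two natural requirements. Square integrability of $F$ near the core circle $\{0\}\times S^1$ — i.e. regularity at $r=0$ — eliminates the contribution of the singular solution ($K_n$ when $m\ne 0$, and the negative power $r^{-n}$ or $r^{n}$ when $m=0$), exactly as in the proof of the Corollary, using the asymptotics (\ref{asym_zero}). The boundary relation (\ref{bound_equ}) of Proposition \ref{boundary_prop} then determines the remaining constant uniquely; for $m=0$ the system is uncoupled and first order, so one integrates directly and imposes $f_{0,n}(1)=0$ for $n\le 0$ and $g_{0,n}(1)=0$ for $n\ge 0$. Resumming these mode-by-mode solutions produces the operator $Q$ of formula (\ref{Q_formula}). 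Along the way one checks that the $F$ so constructed lies in $H^1(ST^2)\otimes\C^2$ and admits the required extension $F^{ext}$, so that $Q$ maps $\mathcal{H}$ into $\mathrm{dom}(D)$.

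With $Q$ in hand, $DQ=I$ is immediate: by construction $QG$ solves $D(QG)=G$ in every Fourier mode. For $QD=I$, take $F\in\mathrm{dom}(D)$; then $QDF\in\mathrm{dom}(D)$ and $D(QDF-F)=(DQ)(DF)-DF=0$, so $QDF-F$ lies in the kernel of $D$ subject to the boundary conditions, which is trivial by the Corollary. Hence $QDF=F$.

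I expect the main difficulty to be organizational rather than conceptual: keeping track of the case distinctions ($m=0$ versus $m\ne 0$, and the sign of $n$), choosing the particular solution so that regularity at $r=0$ and the boundary condition at $r=1$ hold simultaneously, and packaging the resulting integral kernel in a clean closed form. A secondary point is to verify that the mode-by-mode inverses really do assemble into a bounded operator on all of $\mathcal{H}$, with the uniformity in $m$ and $n$ that this requires; the sharper quantitative form of these estimates is the business of the next section.
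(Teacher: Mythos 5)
Your proposal follows essentially the same route as the paper: partial Fourier decomposition, rescaling $t=|m|r$, variation of parameters against the $I_n,K_n$ pair with the explicit Wronskian, fixing the constants by regularity at $r=0$ and the boundary relation (\ref{bound_equ}), and direct integration in the uncoupled $m=0$ modes. Your argument for $QD=I$ via triviality of the kernel is a reasonable way to make precise what the paper dismisses as a routine verification, so there is no substantive difference.
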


\begin{proof}
To compute the inverse of $D$ we solve the equation $DF=G$, which we will reduce to solving a non-homogeneous second order ordinary differential equation.   We will use the Fourier decomposition (\ref{FourierDec}). We first  consider the case $m\neq 0$.   Letting $t=|m|r$, $\tilde f_{m,n}(t)=f_{m,n}(t/|m|)$, and similarly for other functions, $DF=G$ becomes the system of differential equations:

\begin{equation}\label{non_homo_system}
\left\{
\begin{aligned}
&\tilde g_{m,n}'(t) - \frac{n}{t}\tilde g_{m,n}(t) + \frac{m}{|m|}\tilde f_{m,n+1}(t) = \frac{\tilde p_{m,n+1}(t)}{|m|} \\
&-\tilde f_{m,n+1}'(t) - \frac{n+1}{t}\tilde f_{m,n+1}(t) - \frac{m}{|m|}\tilde g_{m,n}(t) = \frac{\tilde q_{m,n}(t)}{|m|}.
\end{aligned}\right.
\end{equation}
Next by substituting the first equation into the second we arrive at the following second order differential equation:

\begin{equation}\label{non_homo_bessel_eq}
\begin{aligned}
&\tilde g_{m,n}''(t) + \frac{1}{t}\tilde g_{m,n}'(t) - \left(1+\frac{n^2}{t^2}\right)\tilde g_{m,n}(t) =\\
&= \frac{1}{m}\tilde q_{m,n}(t) + \frac{\tilde p_{m,n+1}'(t)}{|m|} + \frac{n+1}{t}\cdot \frac{\tilde p_{m,n+1}(t)}{|m|}=:h_{m,n}(t),
\end{aligned}
\end{equation}
where the right hand side of the above equation was denoted by $h_{m,n}(t)$.   Notice that equation $(\ref{non_homo_bessel_eq})$ is the non-homogeneous version of the Bessel differential equation $(\ref{bessel_eq})$.   General theory of ordinary differential equations tells us that the general solution to equation $(\ref{non_homo_bessel_eq})$ is

\begin{equation*}
\tilde g_{m,n}(t) = c_1^{(m,n)}(t)I_n(t) + c_2^{(m,n)}(t)K_n(t)
\end{equation*}
where  $c_1^{(m,n)}(t), c_2^{(m,n)}(t)$ solve the following system

\begin{equation*}
\begin{pmatrix}
I_n(t) & K_n(t) \\
I_n'(t) & K_n'(t)
\end{pmatrix}
\begin{pmatrix}
c_1^{(m,n)}(t) \\
c_2^{(m,n)}(t)
\end{pmatrix}' =
\begin{pmatrix}
0 \\
h_{m,n}(t)
\end{pmatrix}.
\end{equation*}
The solution of this system is 
\begin{equation*}
c_1^{(m,n)}(t) = A_{m,n}+ \int_{|m|}^t sK_n(s)h_{m,n}(s)ds\quad \textrm{ and }\quad c_2^{(m,n)}(t) =B_{m,n} -\int_0^t sI_n(s)h_{m,n}(s)ds,
\end{equation*}
where $A_{m,n}$ and $B_{m,n}$ are constants.
The boundary condition (\ref{bound_equ}) and the regularity at $t=0$ imply that we must have $c_1^{(m,n)}(t)$ equal to zero on the boundary, in other words where $t=|m|$.   The boundary condition and regularity also imply that $c_2^{(m,n)}(t)$ goes to zero as $t\to 0$. After performing the integration by parts we obtain that $A_{m,n}=K_n(|m|)p_{m,n+1}(|m|)$ and $B_{m,n}=0$, and \begin{equation*}
\begin{aligned}
c_1^{(m,n)}(t)&=\frac{1}{m}\int_{|m|}^t sK_n(s)\tilde q_{m,n}(s)ds + \frac{1}{|m|}\int_{|m|}^t sK_{n+1}(s)\tilde p_{m,n+1}(s)ds \\
c_2^{(m,n)}(t)&=-\frac{1}{m}\int_0^t sI_n(s)\tilde q_{m,n}(s)ds + \frac{1}{|m|}\int_0^t sI_{n+1}(s)\tilde p_{m,n+1}(s)ds.
\end{aligned}
\end{equation*}

Next we use the first equation of (\ref{non_homo_system}) to solve for the general $\tilde{f}_{m,n+1}(t)$ term to get:

\begin{equation*}
\frac{m}{|m|}\tilde f_{m,n+1}(t) = \frac{\tilde p_{m,n+1}(t)}{|m|} + \frac{n}{t}\tilde g_{m,n}(t) - \tilde g_{m,n}'(t). 
\end{equation*}
A straightforward calculation using the relations between the derivatives and indices of the modified Bessel functions (\ref{bessel_rel_1}) and the formula for the Wronskian of the modified Bessel functions (\ref{bessel_wronskian}) yields:

\begin{equation*}
\tilde f_{m,n+1}(t) = \frac{m}{|m|}\left(-c_1^{(m,n)}(t)I_{n+1}(t) + c_2^{(m,n)}(t)K_{n+1}(t)\right) .
\end{equation*}

Next consider the case when $m=0$.   The system of differential equations reduces to the following
\begin{equation*}
\left\{
\begin{aligned}
&g_{0,n}'(r) - \frac{n}{r}g_{0,n}(r) = p_{0,n+1}(r) \\
&f_{0,n+1}'(r) + \frac{n+1}{r}f_{0,n+1}(r) = -q_{0,n}(r)
\end{aligned}\right. 
\end{equation*}
This system is an uncoupled system and can be solved using an integration factor in each equation. 
Therefore the formula for the parametrix to $D$ is

\begin{equation}\label{Q_formula}
QG :=\sum_{m\in\Z\setminus\{0\},n\in\Z}\left(
\begin{array}{c}
f_{m,n}(r) \\ g_{m,n}(r)
\end{array}\right)
e^{in\f +im\theta} + \sum_{n\in\Z}\left(
\begin{array}{c}
f_{0,n}(r) \\ g_{0,n}(r)
\end{array}\right)
e^{in\f}
\end{equation}
where for $m\ne 0$:

\begin{equation*}
\begin{aligned}
&f_{m,n+1}(r) =\\
&= |m|I_{n+1}(|m|r)\int_{r}^{1}K_{n}(|m|\rho)q_{m,n}(\rho)\rho d\rho + mI_{n+1}(|m|r)\int_{r}^{1}K_{n+1}(|m|\rho)p_{m,n+1}(\rho)\rho d\rho \\
& - |m|K_{n+1}(|m|r)\int_0^{r}I_{n}(|m|\rho)q_{m,n}(\rho)\rho d\rho + mK_{n+1}(|m|r)\int_0^{r}I_{n+1}(|m|\rho)p_{m,n+1}(\rho)\rho d\rho 
\end{aligned}
\end{equation*}
\begin{equation*}
\begin{aligned}
&g_{m,n}(r) =\\
&= -mI_n(|m|r)\int_{r}^{1}K_n(|m|\rho)q_{m,n}(\rho)\rho d\rho - |m|I_n(|m|r)\int_{r}^{1}K_{n+1}(|m|\rho)p_{m,n+1}(\rho)\rho d\rho\\
&-mK_n(|m|r)\int_0^{r}I_n(|m|\rho)q_{m,n}(\rho)\rho d\rho + |m|K_n(|m|r)\int_0^{r}I_{n+1}(|m|\rho)p_{m,n+1}(\rho)\rho d\rho
\end{aligned}
\end{equation*}
and
\begin{equation*}
f_{0,n+1}(r) = \left\{
\begin{aligned}
-&\int_0^r \frac{\rho^{n}}{r^{n+1}} q_{0,n}(\rho)\rho d\rho \quad n\geq 0 \\
&\int_r^1 \frac{\rho^{n}}{r^{n+1}} q_{0,n}(\rho)\rho d\rho \quad n< 0,
\end{aligned}\right.
\end{equation*}
and
\begin{equation*}
g_{0,n}(r) = \left\{
\begin{aligned}
-&\int_r^1 \frac{r^{n}}{\rho^{n+1}} p_{0,n+1}(\rho)\rho d\rho \quad n\geq 0 \\
&\int_0^r \frac{r^{n}}{\rho^{n+1}} p_{0,n+1}(\rho)\rho d\rho \quad n< 0.
\end{aligned}\right.
\end{equation*}

It is is now a routine exercise to verify that $DQ=QD=I$. Thus this completes the proof.
\end{proof}

\section{The Parametrix}

Now that the parametrix has been constructed the next goal is to show that it is a compact operator.   This is the main result of the paper.   

\begin{theo}\label{cont_Q_theorem}
The Dirac operator $D$, defined by (\ref{dirac_operator}) and subject to the boundary conditions (\ref{dirac_domain}) has a bounded inverse.   Moreover that inverse is a compact operator.
\end{theo}
\begin{proof}

Consider the following integral operators in $L^2([0,1],rdr)$ for $i,j=0,1$ and $m\ne 0$:
\begin{equation*}
R_{ij}^{(m,n)}f(r):=|m|\int_{r}^{1}I_{n+i}(|m|r)K_{n+j}(|m|\rho)f(\rho)\rho d\rho,
\end{equation*}
\begin{equation*}
S_{ij}^{(m,n)}f(r):=|m|\int_{0}^{r}K_{n+i}(|m|r)I_{n+j}(|m|\rho)f(\rho)\rho d\rho,
\end{equation*}
and for $n\geq 0$:
\begin{equation*}
T_{1}^{(0,n)}f(r):=\int_r^1 \frac{r^n}{\rho^{n+1}} f(\rho)\,\rho d\rho,
\end{equation*}
\begin{equation*}
T_{2}^{(0,n)}f(r):=\int_0^r \frac{\rho^{n}}{r^{n+1}} f(\rho)\,\rho d\rho.
\end{equation*}

Then we can rewrite formula (\ref{Q_formula}) for $Q$ in the following way:

\begin{equation}\label{Q_direct_sum}
\begin{aligned}
f_{m,n+1} &= R_{10}^{(m,n)}q_{m,n} + \frac{|m|}{m}R_{11}^{(m,n)}p_{m,n+1} - S_{10}^{(m,n)}q_{m,n} + \frac{|m|}{m}S_{11}^{(m,n)}p_{m,n+1} \\
g_{m,n} &= -\frac{|m|}{m}R_{00}^{(m,n)}q_{m,n} - R_{01}^{(m,n)}p_{m,n+1} -\frac{|m|}{m}S_{00}^{(m,n)}q_{m,n} + S_{11}^{(m,n)}p_{m,n+1}
\end{aligned}
\end{equation}
and
\begin{equation*}
f_{0,n+1} = \left\{
\begin{aligned}
-T_{2}^{(0,n)}&q_{0,n} \quad n\geq 0 \\
T_{1}^{(0,-n-1)}&q_{0,n} \quad n< 0,
\end{aligned}\right.
\end{equation*}
and
\begin{equation*}
g_{0,n}= \left\{
\begin{aligned}
-T_{1}^{(0,n)}&p_{0,n+1} \quad n\geq 0 \\
T_{2}^{(0,-n-1)}&p_{0,n+1} \quad n< 0.
\end{aligned}\right.
\end{equation*}

We will show that all ten integral operators above are Hilbert-Schmidt by estimating the Hilbert-Schmidt norms. It turns out that the HS norm of each integral operator goes to zero as $|m|+|n|$ goes to infinity. This implies that $Q$ is a compact operator as the norm limit of compact operators, since it is (up to a shift in the $n$ index) a direct sum of compact operators with decreasing norms. 

To show that $T_{2}^{(0,n)}$ and $T_{1}^{(0,n)}$ are Hilbert-Schmidt we simply compute:
\begin{equation}\label{T_norm}
\|T_{1}^{(0,n)}\|_2^2 = \|T_{2}^{(0,n)}\|_2^2 = \int_0^1\int_0^r\left(\frac{\rho}{r}\right)^{2n+1}d\rho dr = \frac{1}{4(n+1)}.
\end{equation}
For the other operators we have:

\begin{equation*}
\|R_{ij}^{(m,n)}\|_2^2 = |m|^2\int_0^1\int_{r}^{1}I_{n+i}^2(|m|r)K_{n+j}^2(|m|\rho )\,r\rho d\rho dr,
\end{equation*}
and
\begin{equation*}
\|S_{ij}^{(m,n)}\|_2^2 = |m|^2\int_0^1\int_{0}^{r}K_{n+i}^2(|m|r)I_{n+j}^2(|m|\rho )\,r\rho d\rho dr.
\end{equation*}
Clearly we have $R_{11}^{(m,n)}=R_{00}^{(m,n+1)}$ and $S_{11}^{(m,n)}=S_{00}^{(m,n+1)}$ and additionally, using the inequality (\ref{bessel_n_inq}), we can conclude that: 

\begin{equation*}
\begin{aligned}
&\|R_{10}^{(m,n)}\|_2^2 \le \|R_{00}^{(m,n)}\|_2^2 \le \|R_{01}^{(m,n)}\|_2^2 \\
&\|S_{01}^{(m,n)}\|_2^2 \le \|S_{00}^{(m,n)}\|_2^2 \le \|S_{10}^{(m,n)}\|_2^2.
\end{aligned}
\end{equation*}
Consequently, we only have to estimate the Hilbert-Schmidt norm for $R_{01}^{(m,n)}$ and $S_{10}^{(m,n)}$.    We have:

\begin{equation*}
\|R_{01}^{(m,n)}\|_2^2 =  \frac{1}{|m|^2}\int_0^{|m|}\int_t^{|m|}K_{n+1}^2(s)I_n^2(t)stdsdt=\frac{1}{|m|^2}\int_0^{|m|}\int_0^s K_{n+1}^2(s)I_n^2(t)stdtds,
\end{equation*}
where we've changed to new variables $t=|m|r$, $s=|m|\rho$,  and used Fubini's Theorem.  We will now estimate the above expression in two ways to show that it goes to zero when $|m|+|n|$ increases.
  
From (\ref{log_deriv_0}) we have $I_n^2(t) \leq \frac{t}{n}I_n(t)I_n'(t)$ which by integration yields:
\begin{equation}\label{bessel_sqr_1}
\int_0^sI_n^2(t)dt \leq  \frac{s}{2n}I_n^2(s).
\end{equation}
Next,  using $t\leq s$, (\ref{bessel_sqr_1}), and (\ref{bessel_product_inq}), we get for $n\ne 0$:
\begin{equation*}
\|R_{01}^{(m,n)}\|_2^2\leq \frac{1}{2|n||m|^2}\int_0^{|m|} s^3K_{n+1}^2(s)I_n^2(s)ds\le \frac{1}{2|n||m|^2}\int_0^{|m|} sds = \frac{1}{4|n|}.
\end{equation*}
On the other hand from (\ref{log_deriv_1}) we have $I_n^2(t)\leq {2}I_n(t)I_n'(t)$, yielding:
\begin{equation}\label{bessel_sqr_2}
\int_0^sI_n^2(t)dt\leq I_n^2(s) .
\end{equation}
So using inequalities (\ref{bessel_sqr_2}) and (\ref{bessel_product_inq}) again, we get for $n\ne 0$:
\begin{equation*}
\|R_{01}^{(m,n)}\|_2^2\leq \frac{1}{|m|^2}\int_0^{|m|} s^2K_{n+1}^2(s)I_n^2(s)ds\le \frac{1}{|m|^2}\int_0^{|m|} ds = \frac{1}{|m|}.
\end{equation*}
Finally, if $n=0$, we notice that the recurrence relations (\ref{bessel_rel_1}) imply:
\begin{equation*}
tI_0^2(t)=\left(tI_1(t)I_0(t)\right)'-tI_1^2(t)\leq \left(tI_1(t)I_0(t)\right)'.
\end{equation*}
Hence we get an integral estimate: 
\begin{equation*}
\int_0^s I_0^2(t)tdt\leq sI_1(s)I_0(s)\leq sI_0^2(s),
\end{equation*}
which we use to estimate the norm above as follows:
\begin{equation*}
\|R_{01}^{(m,0)}\|_2^2\leq \frac{1}{|m|^2}\int_0^{|m|} s^2K_{1}^2(s)I_0^2(s)ds\leq  \frac{1}{|m|}.
\end{equation*}

For the norm of $S_{10}^{(m,n)}$ we observe that, after a change of variables, we have:

\begin{equation*}
\|S_{10}^{(m,n)}\|_2^2 =  \frac{1}{|m|^2}\int_0^{|m|}\int_0^tI_n^2(s)K^2_{n+1}(t)stdsdt = \|R_{01}^{(m,n)}\|_2^2.
\end{equation*}
This shows that all of the operators are indeed Hilbert-Schmidt operators. Moreover we have the estimates:
\begin{equation}\label{R_norm}
\|R_{ij}^{(m,n)}\|_2^2\leq\frac{\textrm{const}}{\sqrt{1+m^2+n^2}},
\end{equation}
and
\begin{equation}\label{S_norm}
\|S_{ij}^{(m,n)}\|_2^2\leq\frac{\textrm{const}}{\sqrt{1+m^2+n^2}}.
\end{equation}
It follows by the remarks at the beginning of the proof that $Q$ is compact. Thus the proof of the theorem is complete.

\end{proof}

\begin{theo}\label{Schatten_Q_theorem}
The  operator $Q$, defined by (\ref{Q_formula}), is a p-th Schatten-class operator for all $p>3$. 
\end{theo}
\begin{proof}
Notice that the $p$-th Schatten norm of $Q$ can be estimated as follows:
\begin{equation}\label{Qp_estimate}
\|Q\|_p^p\leq \textrm{const}\sum_{m,n,i,j}\left(\|R_{ij}^{(m,n)}\|_p^p+\|S_{ij}^{(m,n)}\|_p^p\right)+
 \sum_{n,i}\|T_{i}^{(0,n)}\|_p^p.
\end{equation}
This is because $Q$ is (essentially) a direct sum of two by two matrices with entries made up of the ten integral operators we studied above, see (\ref{Q_direct_sum}).

To bound $\|R_{ij}^{(m,n)}\|_p^p$ and the other norms we use the following  interpolation estimate for the  $p$-th Schatten norm: if $a$ is a Hilbert-Schmidt operator and $p\geq 2$ then
\begin{equation}\label{log_convex}
\|a\|_p^p\leq \|a\|_2^2\,\|a\|^{p-2}.
\end{equation}
The estimate easily follows from the definition  of the p-th Schatten norm. We have already obtained estimates on the Hilbert-Schmidt norms of $R_{ij}^{(m,n)}$ and the other operators in (\ref{R_norm}), (\ref{S_norm}), and (\ref{T_norm}), so by the above interpolation we need estimates on the operator norms.
The main tool used to establish such estimates for the operator norms of integral operators is the Schur-Young inequality, see \cite{HS}.   It is stated in the lemma below.

\begin{lem}(Schur-Young Inequality)\label{shuryounginq}
Let $\mathcal{K}: L^2(Y) \longrightarrow L^2(X)$ be an integral operator:

\begin{equation*}
\mathcal{K}f(x) = \int K(x,y)f(y)dy
\end{equation*}
Then we have:

\begin{equation*}
\|\mathcal{K}\|^2 \le \left(\underset{x\in X}{\textrm{sup}}\int_Y |K(x,y)|dy\right)\left(\underset{y\in Y}{\textrm{sup}}\int_X |K(x,y)|dx\right).
\end{equation*}
\end{lem}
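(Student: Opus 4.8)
The plan is to derive the inequality directly from the Cauchy--Schwarz inequality, with a single application of Tonelli's theorem to interchange the order of integration. Write $M=\sup_{x\in X}\int_Y|K(x,y)|\,dy$ and $N=\sup_{y\in Y}\int_X|K(x,y)|\,dx$ for the two quantities appearing on the right-hand side. If either supremum is infinite the claimed bound is vacuous, so I may assume $M,N<\infty$; the goal is then to prove the pointwise-to-global estimate $\|\mathcal{K}f\|^2\le MN\,\|f\|^2$ for every $f\in L^2(Y)$, which immediately yields $\|\mathcal{K}\|^2\le MN$.

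The first step is a pointwise estimate on $\mathcal{K}f(x)$. Starting from $|\mathcal{K}f(x)|\le\int_Y|K(x,y)|\,|f(y)|\,dy$, I would split the kernel as $|K(x,y)|=|K(x,y)|^{1/2}\cdot|K(x,y)|^{1/2}$ and apply Cauchy--Schwarz to the two factors $|K(x,y)|^{1/2}$ and $|K(x,y)|^{1/2}|f(y)|$. This yields
\[
|\mathcal{K}f(x)|^2\le\left(\int_Y|K(x,y)|\,dy\right)\left(\int_Y|K(x,y)|\,|f(y)|^2\,dy\right)\le M\int_Y|K(x,y)|\,|f(y)|^2\,dy,
\]
where the second inequality uses the definition of $M$. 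The essential idea is this symmetric split of the kernel into two square-root factors, which balances the two directions of integration appearing in the statement.

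The second step is to integrate this bound over $X$ and swap the order of integration. Since the integrand $|K(x,y)|\,|f(y)|^2$ is nonnegative and measurable on $X\times Y$, Tonelli's theorem justifies the interchange, giving
\[
\|\mathcal{K}f\|^2\le M\int_X\int_Y|K(x,y)|\,|f(y)|^2\,dy\,dx = M\int_Y|f(y)|^2\left(\int_X|K(x,y)|\,dx\right)dy\le MN\,\|f\|^2,
\]
where the last step bounds the inner $x$-integral by $N$. Since this holds for all $f$, dividing by $\|f\|^2$ and taking the supremum over nonzero $f$ completes the argument.

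The only point requiring care is the appeal to Tonelli's theorem: it applies precisely because the integrand is nonnegative, so no integrability hypothesis on $K$ is needed beyond measurability, and the possibility that $M$ or $N$ is infinite has already been dispatched at the outset. This is bookkeeping rather than a genuine obstacle, since the whole proof is a two-line Cauchy--Schwarz-plus-Tonelli computation; the only conceptual content lies in choosing the balanced factorization of the kernel in the first step.
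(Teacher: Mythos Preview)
Your argument is correct and is the standard proof of the Schur test: the symmetric factorization $|K|=|K|^{1/2}\cdot|K|^{1/2}$ followed by Cauchy--Schwarz and Tonelli is exactly how this inequality is established. The paper does not actually prove this lemma; it simply states it and cites \cite{HS} (Halmos--Sunder), where the same argument appears, so there is nothing further to compare.
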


The kernels of our integral operators are products of modified Bessel functions, and the difficulty here is to estimate the integrals of such products. The main technical step in those estimates is summarized in the following lemma.

\begin{lem}\label{int_bndedness}
Consider the following  expressions for $m\neq0$: 
\begin{equation*}
\mathcal{I}_1^{(m,n)} = \frac{1}{|m|}\underset{0\leq s\leq |m|}{\textrm{ sup }}\int_0^s K_{n+1}(s)I_n(t)tdt,
\end{equation*}
\begin{equation*}
\mathcal{I}_2^{(m,n)} = \frac{1}{|m|}\underset{0\leq t\leq |m|}{\textrm{ sup }}\int_t^{|m|} K_{n+1}(s)I_n(t)sds.
\end{equation*}
There is a constant such that for $i=1,2$:
\begin{equation*}
\mathcal{I}_i^{(m,n)}\leq\frac{\textrm{const}}{\sqrt{1+m^2+n^2}}.
\end{equation*}
\end{lem}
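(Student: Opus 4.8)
The plan is to estimate the two suprema separately, and in each case to produce two competing bounds — one of order $1/|m|$ and one of order $1/(|n|+1)$ — and then observe that their minimum is controlled by $(1+m^2+n^2)^{-1/2}$, since $1+m^2+n^2\le m^2+(|n|+1)^2\le 2\max(|m|,|n|+1)^2$ for every $n$, so that $\min\bigl(1/|m|,1/(|n|+1)\bigr)\le\sqrt2\,(1+m^2+n^2)^{-1/2}$. Throughout I will use: the derivative recurrences (\ref{bessel_rel_1}) and the Wronskian (\ref{bessel_wronskian}); the monotonicity in the order (\ref{bessel_n_inq}), namely $I_{n+1}\le I_n$ and $K_n\le K_{n+1}$; the product bound $sI_n(s)K_{n+1}(s)\le 1$ of (\ref{bessel_product_inq}); the classical inequality $I_\nu(s)K_\nu(s)\le 1/(2\nu)$ for $\nu>0$; and the elementary power-series estimate $I_{n+1}(t)\le \frac{t}{2(n+1)}I_n(t)$. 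Since $I_{-\nu}=I_\nu$ and $K_{-\nu}=K_\nu$, it suffices to treat $n\ge 0$; the case $n<0$ follows by relabelling, the relevant orders $|n|$ and $|n+1|$ being comparable.

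\emph{The estimate for $\mathcal{I}_1$.} Here the point is to control $\int_0^s tI_n(t)\,dt$. From the recurrences, $tI_n(t)=\frac{d}{dt}\bigl(tI_{n+1}(t)\bigr)+nI_{n+1}(t)$, hence $\int_0^s tI_n(t)\,dt=sI_{n+1}(s)+n\int_0^s I_{n+1}(t)\,dt$; inserting $I_{n+1}(t)\le \frac{t}{2(n+1)}I_n(t)$ and solving the resulting linear inequality for the integral gives $\int_0^s tI_n(t)\,dt\le 2sI_{n+1}(s)$. Therefore $K_{n+1}(s)\int_0^s tI_n(t)\,dt\le 2sI_{n+1}(s)K_{n+1}(s)$. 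Bounding $sI_{n+1}(s)K_{n+1}(s)\le sI_{n+1}(s)K_{n+2}(s)\le 1$ yields $\mathcal{I}_1^{(m,n)}\le 2/|m|$, and bounding instead $sI_{n+1}(s)K_{n+1}(s)\le \frac{s}{2(n+1)}\le \frac{|m|}{2(n+1)}$ yields $\mathcal{I}_1^{(m,n)}\le 1/(n+1)$; the minimum is $\le\mathrm{const}/\sqrt{1+m^2+n^2}$.

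\emph{The estimate for $\mathcal{I}_2$.} Writing $\mathcal{I}_2^{(m,n)}=\frac{1}{|m|}\sup_{0\le t\le|m|}I_n(t)\int_t^{|m|}sK_{n+1}(s)\,ds$, the difficulty is that $sK_{n+1}(s)$ fails to be integrable at $s=0$ once $n\ge 2$, so the integral near $t=0$ must be absorbed by the vanishing of $I_n(t)$ there (and near $t=|m|$ by the exponential smallness of $K_{n+1}$). For $n\ge 2$ I will use $sK_{n+1}(s)=-s^{n+1}\frac{d}{ds}\bigl(s^{-n}K_n(s)\bigr)$ and integrate by parts to get
\begin{equation*}
\int_t^{|m|}sK_{n+1}(s)\,ds=(n+1)\int_t^{|m|}K_n(s)\,ds+tK_n(t)-|m|K_n(|m|).
\end{equation*}
Multiplying by $I_n(t)$, discarding the last (negative) term, and estimating the surviving integral by the recurrence inequality $K_n(s)<\frac{s}{2n}K_{n+1}(s)$, one finds that $X:=I_n(t)\int_t^{|m|}sK_{n+1}(s)\,ds$ satisfies $X<\frac{n+1}{2n}X+tI_n(t)K_n(t)$; solving for $X$ gives $X<\frac{2n}{n-1}\,tI_n(t)K_n(t)\le 4\,tI_n(t)K_n(t)$ for $n\ge 2$. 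Then $tI_n(t)K_n(t)\le tI_n(t)K_{n+1}(t)\le 1$ gives $\mathcal{I}_2^{(m,n)}\le 4/|m|$, while $tI_n(t)K_n(t)\le \frac{t}{2n}\le \frac{|m|}{2n}$ gives $\mathcal{I}_2^{(m,n)}\le 2/n$; again take the minimum. The remaining cases $n=0,1$, where the coefficient $\frac{n+1}{2n}\ge 1$ so that "solving for $X$" degenerates, I will treat directly: using the same identity together with $K_0'=-K_1$ (so $\int_t^{|m|}K_1=K_0(t)-K_0(|m|)\le K_0(t)$), the integrability of $K_0$ near the origin, the boundedness of $I_0K_0$ away from the origin, and $I_1K_0\le I_1K_1\le\tfrac12$, one bounds $I_n(t)\int_t^{|m|}sK_{n+1}(s)\,ds$ by an absolute constant uniformly in $t,m$, hence $\mathcal{I}_2^{(m,n)}\le\mathrm{const}/|m|\le\mathrm{const}/\sqrt{1+m^2+n^2}$.

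I expect the step carrying the weight of the argument to be the $n\ge 2$ part of $\mathcal{I}_2$: the integration-by-parts identity trades $\int sK_{n+1}$ for $(n+1)\int K_n$ plus controllable boundary terms, and it is the self-referential inequality for $X$ that actually extracts decay in $|m|$ from an integrand blowing up at the origin. By contrast $\mathcal{I}_1$ and the low-order cases of $\mathcal{I}_2$ reduce to routine manipulations of the recurrences together with the two product bounds. Combining all cases, both $\mathcal{I}_1^{(m,n)}$ and $\mathcal{I}_2^{(m,n)}$ are $\le\mathrm{const}/\sqrt{1+m^2+n^2}$, which is the assertion.
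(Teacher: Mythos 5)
Your proof is correct and follows essentially the same route as the paper's: reduce to $n\geq 0$ by symmetry, use the recurrence relations and integration by parts to collapse each integral onto a pointwise product of modified Bessel functions, then play the two product bounds $tI_\nu(t)K_{\nu+1}(t)\leq 1$ and $I_\nu(t)K_\nu(t)\leq 1/(2\nu)$ against each other to obtain competing $1/|m|$ and $1/(n+1)$ estimates whose minimum is $O\left((1+m^2+n^2)^{-1/2}\right)$. The only real divergence is in $\mathcal{I}_2$, where the paper applies $K_{n+1}\leq -2K_n'$ twice to get $\int_t^{|m|}sK_{n+1}(s)\,ds\leq 2tK_n(t)+4K_n(t)$ uniformly in $n$ (isolating only $n=0$, which it settles by a L'Hospital computation), whereas your self-absorbing inequality $X\leq \frac{n+1}{2n}X+tI_n(t)K_n(t)$ only closes for $n\geq 2$ and forces you to patch $n=0,1$ by hand --- a harmless but slightly longer detour.
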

The proof of this lemma will be postponed until we finish the main line of the argument.
Now we turn to estimating $\|R_{ij}^{(m,n)}\|$ for $m\ne 0$. Using Lemma \ref{shuryounginq} we have:

\begin{equation*}
\|R_{ij}^{(m,n)}\|^2 \le m^2\left(\isup{0}{r}{1}\int_{r}^{1}K_{n+i}(|m|\rho)I_{n+j}(|m|r)\rho d\rho\right)\left(\isup{0}{\rho}{1}\int_0^{\rho}K_{n+i}(|m|\rho)I_{n+j}(|m|r)rdr\right)
\end{equation*}
Changing variables in both integrals we get:

\begin{equation*}
\|R_{ij}^{(m,n)}\|^2 \le \left(\frac{1}{|m|}\isup{0}{t}{|m|}\int_{t}^{|m|}K_{n+i}(s)I_{n+j}(t)sds\right)\left(\frac{1}{|m|}\isup{0}{s}{|m|}\int_0^s K_{n+i}(s)I_{n+j}(t)tdt\right).
\end{equation*}
By the monotonicity (\ref{bessel_n_inq}) the right hand side is biggest when $i=1$ and $j=0$, and so
\begin{equation*}
\|R_{ij}^{(m,n)}\|^2\le\mathcal{I}_1^{(m,n)}\cdot\mathcal{I}_2^{(m,n)}.
\end{equation*}
It follows from Lemma \ref{int_bndedness} that:
\begin{equation}\label{R_op_norm}
\|R_{ij}^{(m,n)}\|\leq\frac{\textrm{const}}{\sqrt{1+m^2+n^2}}.
\end{equation}

We turn our attention to estimating $\|S_{ij}^{(m,n)}\|$.   By Lemma \ref{shuryounginq} we have:
\begin{equation*}
\|S_{ij}^{(m,n)}\|^2 \le m^2\left(\isup{0}{r}{1}\int_0^{r}I_{n+i}(|m|\rho)K_{n+j}(|m|r)\rho d\rho\right)\left(\isup{0}{\rho}{1}\int_{\rho}^1 I_{n+i}(|m|\rho)K_{n+j}(|m|r)rdr\right)
\end{equation*}
Clearly the expression on the right hand side of the above inequality is the same as the expression in the estimate of $\|R_{ij}^{(m,n)}\|^2$.
It follows that 
\begin{equation*}
\|S_{ij}^{(m,n)}\|\leq\frac{\textrm{const}}{\sqrt{1+m^2+n^2}}. 
\end{equation*}

Consider now the case $m=0$.   
We use Lemma \ref{shuryounginq} once again to compute:

\begin{equation*}
\|T_{2}^{(0,n)}\|^2 \le \left(\isup{0}{r}{1}\int_0^r \left(\frac{\rho}{r}\right)^{n+1}d\rho\right)\left(\isup{0}{\rho}{1}\int_\rho^1 \left(\frac{\rho}{r}\right)^{n}dr\right) \le \frac{\textrm{const}}{1+n^2}
\end{equation*}
and similarly

\begin{equation*}
\|T_{1}^{(0,n)}\|^2 \le \left(\isup{0}{r}{1}\int_r^1 \left(\frac{r}{\rho}\right)^{n}d\rho\right)\left(\isup{0}{\rho}{1}\int_0^\rho \left(\frac{r}{\rho}\right)^{n+1}dr\right) \le \frac{\textrm{const}}{1+n^2}.
\end{equation*}
Either way we have for $i=1,2$:

\begin{equation*}
\|T_i^{(0,n)}\| \le \frac{\textrm{const}}{\sqrt{1+n^2}}.
\end{equation*}
Combining (\ref{R_norm}), (\ref{R_op_norm}), and using (\ref{log_convex}) we get:
\begin{equation*}
\|R_{ij}^{(m,n)}\|_p^p\leq\frac{\textrm{const}}{(1+m^2+n^2)^{\frac{p-1}{2}}}\ ,
\end{equation*}
and exactly the same estimates for $\|S_{ij}^{(m,n)}\|_p^p$ and $\|T_i^{(0,n)}\|_p^p$. Consequently, by (\ref{Qp_estimate}) we get:
\begin{equation*}
\|Q\|_p^p\leq \sum_{m,n}\frac{\textrm{const}}{(1+m^2+n^2)^{\frac{p-1}{2}}}\ ,
\end{equation*}
where the series is summable when $p>3$. This concludes the proof of the theorem.
\end{proof}

\begin{proof} (of Lemma \ref{int_bndedness})

Since both $K_n(z)$ and $I_n(z)$ are symmetric for positive and negative $n$, see (\ref{bessel_sym}), we will only need to consider the case when $n\geq 0$.   

Using (\ref{log_deriv_1}) and integrating by parts we have:
\begin{equation*}
\int_0^sI_n(t)tdt\leq 2\int_0^sI'_{n+1}(t)tdt =2sI_{n+1}(s)-2\int_0^sI_{n+1}(t)dt\leq 2sI_{n+1}(s).
\end{equation*}
Consequently we get:
\begin{equation*}
\mathcal{I}_1^{(m,n)}=\frac{1}{|m|}\underset{0\leq s\leq |m|}{\textrm{ sup }}\int_0^s K_{n+1}(s)I_n(t)tdt \leq  \frac{1}{|m|}\underset{0\leq s\leq|m|}{\textrm{ sup }}2sK_{n+1}(s)I_{n+1}(s).
\end{equation*}
Now we bound $\mathcal{I}_1^{(m,n)}$ in two different ways. First we observe:
\begin{equation*}
\mathcal{I}_1^{(m,n)}\leq 2 \underset{0\leq s\leq|m|}{\textrm{ sup }}K_{n+1}(s)I_{n+1}(s)\leq\frac{2}{n+1},
\end{equation*}
by (\ref{bessel_product_zero}). On the other hand we have:
\begin{equation*}
\mathcal{I}_1^{(m,n)}\leq  \frac{2}{|m|}\underset{0\leq s\leq\infty}{\textrm{ sup }}sK_{n+1}(s)I_{n+1}(s)\leq \frac{2}{|m|},
\end{equation*}
by inequality (\ref{bessel_product_inq}).
It follows that $\mathcal{I}_1^{(m,n)} \leq \textrm{const}/\sqrt{1+m^2+n^2}$.

We estimate $\mathcal{I}_2^{(m,n)}$ in the same fashion, however the process  
is somewhat more complicated. Using  (\ref{log_deriv_2}) and integrating by parts we get:
\begin{equation*}
\int_t^{|m|} sK_{n+1}(s)ds \leq -2\int_t^{|m|}sK_n'(s)ds\leq 
2tK_n(t) + \int_t^{|m|} K_{n}(s)ds.
\end{equation*}
Using  (\ref{log_deriv_2}) again yields:
\begin{equation*}
\int_t^{|m|} sK_{n+1}(s)ds \leq 2tK_n(t) + 4K_{n}(t).
\end{equation*}
It follows that:

\begin{equation*}
\mathcal{I}_2^{(m,n)} = \frac{1}{|m|}\underset{0\leq t\leq |m|}{\textrm{ sup }}\int_t^{|m|} K_{n+1}(s)I_n(t)sds \leq 
 \frac{1}{|m|}\underset{0\leq t\leq |m|}{\textrm{ sup }}\left(2tK_n(t)I_n(t) + 4K_{n}(t)I_n(t)\right).
\end{equation*}
If $n>0$ we estimate the above expression in two ways using (\ref{bessel_product_inq})  and (\ref{bessel_product_zero}).
First we have:
\begin{equation*}
\mathcal{I}_2^{(m,n)}\leq \frac{1}{|m|}(2|m|+4)\frac{1}{2n}.
\end{equation*}
Secondly:
\begin{equation*}
\mathcal{I}_2^{(m,n)}\leq  \frac{1}{|m|}\left(2+\frac{4}{2n}\right).
\end{equation*}
If $n=0$ we have:
\begin{equation*}
\mathcal{I}_2^{(m,n)} \leq \frac{1}{|m|}\underset{0\leq t<\infty}{\textrm{ sup }}I_0(t)\int_t^{\infty} K_{1}(s)sds,
\end{equation*}
and we need to show that the function $I_0(t)\int_t^{\infty} K_{1}(s)sds$ is bounded. It follows from the asymptotic behavior
(\ref{asym_zero}) and (\ref{asym_infinity}) that the limit of $I_0(t)\int_t^{\infty} K_{1}(s)sds$ at $t=0$ is $\int_0^{\infty} K_{1}(s)sds<\infty$. On the other hand using L'Hospital's rule we get:
\begin{equation*}
\lim_{t\to\infty}I_0(t)\int_t^{\infty} K_{1}(s)sds=\lim_{t\to\infty}\frac{I_0^2(t)K_1(t)t}{I_0'(t)}=\lim_{t\to\infty}\frac{I_0^2(t)K_1(t)t}{I_1(t)}=\frac{1}{2},
\end{equation*}
by (\ref{asym_infinity}) again.
Thus, in a similar fashion to $\mathcal{I}_1^{(m,n)}$, we have  $\mathcal{I}_2^{(m,n)} \leq \textrm{const}/\sqrt{1+m^2+n^2}$. 
Therefore the proof of the lemma is complete.
\end{proof}
In conclusion we would like to remark that a somewhat more complicated proof of Lemma \ref{int_bndedness} is possible without the use of the  non-elementary inequality (\ref{product_mon}). Estimating along the lines of the Hilbert-Schmidt norm bound in the proof of Theorem \ref{cont_Q_theorem} it is enough to employ the inequalities (\ref{log_deriv_0}),
(\ref{log_deriv_1}), (\ref{log_deriv_2}), and (\ref{log_deriv_3}) instead of monotonicity of $K_n(t)I_n(t)$.

\appendix
\section{}
This section contains all of the relevant information on the modified Bessel functions.   The references that are used are \cite{AS}, \cite{Baricz}, \cite{MP}, and \cite{Soni}.   A short argument will be given for those results that are not from any of these references.   

\subsection{Basic properties}
The main reference of this subsection is \cite{AS}. 

The modified Bessel functions of integer order $n$ can be defined by the following expressions:

\begin{equation*}
I_n(t) = \frac{1}{\pi}\int_0^\pi e^{t\cos\alpha}\cos(n\alpha)\ d\alpha
\end{equation*}
and

\begin{equation}\label{K_def}
K_n(t) = \int_0^\infty e^{-t\cosh\alpha}\cosh(n\alpha)\ d\alpha
\end{equation}
where in both formulas $t$ is a positive real number.  

Both  functions are symmetric in $n$:

\begin{equation}\label{bessel_sym}
I_n(t) = I_{-n}(t) \textrm{ and }K_n(t) = K_{-n}(t).
\end{equation}
Consequently, without the loss of generality, it will be assumed that $n$ is a non-negative integer.

We have the following power series representation for $I_n(t)$:
\begin{equation*}
I_n(t)=\sum_{k=0}^\infty\frac{\left(t/2\right)^{n+2k}}{k!(n+k)!}
\end{equation*}
It follows that both modified Bessel functions are positive.

They are two independent solutions of the second-order differential equation:

\begin{equation}\label{bessel_eq}
\frac{d^2x}{dt^2} + \frac{1}{t}\frac{dx}{dt} - \left(1+ \frac{n^2}{t^2}\right)x(t) =0
\end{equation}
which is called the modified Bessel equation.      

They satisfy the recurrence relations with derivatives:

\begin{equation}\label{bessel_rel_1}
I_n'(t) = I_{n+1}(t) + \frac{n}{t}I_n(t) \textrm{ and } K_n'(t) = -K_{n+1}(t) + \frac{n}{t}K_n(t),
\end{equation}
as well as:
\begin{equation}\label{bessel_rel_2}
I_n'(t) = I_{n-1}(t) - \frac{n}{t}I_n(t) \textrm{ and } K_n'(t) = -K_{n-1}(t) - \frac{n}{t}K_n(t),
\end{equation}

The Wronskian of the two functions is:

\begin{equation}\label{bessel_wronskian}
W(K_n(t),I_n(t)) = \det \left(
\begin{array}{cc}
K_n(t) & I_n(t) \\
K_n'(t) & I_n'(t)
\end{array}\right) = I_n(t)K_{n+1}(t) + I_{n+1}(t)K_n(t) = 1/t.
\end{equation}

They have the following expansions near zero for $n\geq 0$:

\begin{equation}\label{asym_zero}
I_n(t) \sim \frac{1}{\Gamma(n+1)}\left(\frac{t}{2}\right)^n\textrm{ and } K_n(t) \sim \left\{
\begin{array}{cc}
-\ln\left(\frac{t}{2}\right) -\gamma &\textrm{ if } n=0 \\
\frac{\Gamma(n)}{2}\left(\frac{2}{t}\right)^n &\textrm{ if } n>0
\end{array}\right.
\end{equation}
where $\gamma$ is the Euler-Mascheroni constant.  The expansions at infinity are:

\begin{equation}\label{asym_infinity}
I_n(t) \sim \frac{e^t}{\sqrt{2\pi t}} \textrm{ and } K_n(t) \sim e^{-t}\sqrt{\frac{\pi}{2t}}.
\end{equation}

In the following subsections we collect less known results about the modified Bessel functions. 

\subsection{Monotonicity}

The modified Bessel functions have simple monotonicity properties in the argument $t$: $I_n'(t)>0$ and $K_n'(t)\leq 0$ on $(0,\infty)$, which says that $I_n(t)$ is increasing and $K_n(t)$ is decreasing.  The first inequality follows from (\ref{bessel_rel_1}). The second inequality follows immediately from the integral representation (\ref{K_def}).   

Additionally there are the following monotonicity properties in the order $n$:

\begin{equation}\label{bessel_n_inq}
I_{n+1}(t) \le I_n(t) \textrm{ and } K_n(t) \le K_{n+1}(t).
\end{equation}
The first inequality was proven in \cite{Soni}.  It also follows from Turan - type inequality \cite{Baricz}:
\begin{equation*}
I_{n-1}(t)I_{n+1}(t)-I_{n}^2(t)\leq 0.
\end{equation*}
For the second inequality we estimate
\begin{equation*}
\begin{aligned}
&K_{n+1}(t) = \int_0^\infty e^{-t\cosh\alpha}\left(\cosh(n\alpha)\cosh\alpha + \sinh(n\alpha)\sinh\alpha\right)d\alpha\ge \\
&\ge \int_0^\infty e^{-t\cosh\alpha} \cosh(n\alpha)\cosh\alpha d\alpha\ge \int_0^\infty e^{-t\cosh\alpha} \cosh(n\alpha) d\alpha = K_n(t).
\end{aligned}
\end{equation*}

We also have monotonicity of the product: 
\begin{equation}\label{product_mon}
\left(K_n(t)I_n(t)\right)'\leq 0
\end{equation}
i.e. $K_n(t)I_n(t)$ is a decreasing function of $t$, see \cite{MP}.

\subsection{Product estimates}

For $n\geq 1$ we have:

\begin{equation*}
\lim_{t\to0^+} K_n(t)I_n(t) = \frac{1}{2n}.
\end{equation*}
This is a simple consequence of the asymptotics of $I_n(t)$ and $K_n(t)$ as $t\to 0$, see (\ref{asym_zero}).
Since $I_n(t)K_n(t)$ is decreasing on $(0,\infty)$, we have:

\begin{equation}\label{bessel_product_zero}
K_n(t)I_n(t) \leq  \frac{1}{2n}.
\end{equation}
Additionally we have:

\begin{equation}\label{bessel_product_inq}
tK_{n}(t)I_n(t)\leq tK_{n+1}(t)I_n(t) \leq 1.
\end{equation}
The inequality follows from (\ref{bessel_n_inq}) and from the Wronskian formula (\ref{bessel_wronskian}) since both terms on the left-hand side of that equation are positive.

\subsection{Derivative estimates}

The proofs in this paper use the following two inequalities with derivatives of the modified Bessel functions of the first kind. They are, for $n>0$:
\begin{equation}\label{log_deriv_0}
I_n(t)\leq  \frac{t}{n}I_n'(t),
\end{equation}
and
\begin{equation}\label{log_deriv_1}
I_{n-1}\leq I_n(t)\leq  2\,I_n'(t).
\end{equation}
To prove them we notice that from (\ref{bessel_rel_1}) we get $I_n'(t) - \frac{n}{t}I_n(t) >0$, which gives (\ref{log_deriv_0}). Secondly,
(\ref{bessel_rel_1}) and (\ref{bessel_rel_2}) give:
\begin{equation*}
2\,I_n'(t)=I_{n+1}(t)+I_{n-1}(t)\geq I_{n-1}(t)\geq I_{n}(t),
\end{equation*}
which is (\ref{log_deriv_1}).

Both inequalities above are also a direct consequence of the following stronger result of \cite{MP}:
\begin{equation*}
\frac{tI_n'(t)}{I_n(t)} > \sqrt{t^2\frac{n}{n+1}+n^2}
\end{equation*}

For the modified Bessel functions of the second kind we have the following useful result from \cite{MP}:

\begin{equation*}
\frac{sK_n'(s)}{K_n(s)}\leq  -\sqrt{s^2+n^2}. 
\end{equation*}
An analog of (\ref{log_deriv_0}) and obtainable in the same way from (\ref{bessel_rel_2}) is:
\begin{equation}\label{log_deriv_3}
K_n(s)\leq  -\frac{s}{n}K_n'(s).
\end{equation}
However for our applications we only need the following estimate:
combining (\ref{bessel_rel_1}) and (\ref{bessel_rel_2}) gives:
\begin{equation}\label{log_deriv_2}
-2\,K_n'(t)=K_{n+1}(t)+K_{n-1}(t)\geq K_{n+1}(t)\geq K_{n}(t).
\end{equation}

\end{document}